\documentclass[11pt]{article}

\usepackage[a4paper,margin=1.85cm]{geometry}
\usepackage{amsmath,amssymb,amsthm,mathtools,bm}
\usepackage{braket}
\usepackage{booktabs,tabularx,array,multirow}
\usepackage{graphicx,float}
\usepackage{xcolor,microtype}
\usepackage{enumitem}
\usepackage[authoryear,round]{natbib}
\usepackage{hyperref}

\hypersetup{colorlinks=true,linkcolor=blue!55!black,citecolor=blue!55!black,urlcolor=blue!55!black}
\setlist{leftmargin=2em,itemsep=2pt,topsep=3pt}
\newtheorem{theorem}{Theorem}[section]
\newtheorem{proposition}[theorem]{Proposition}
\newtheorem{lemma}[theorem]{Lemma}

\theoremstyle{remark}
\newtheorem{remark}[theorem]{Remark}
\newcommand{\C}{\mathbb C}

\newcommand{\tr}{\operatorname{tr}}
\newcommand{\rank}{\operatorname{rank}}

\DeclarePairedDelimiter{\norm}{\lVert}{\rVert}
\DeclarePairedDelimiter{\abs}{\lvert}{\rvert}

\title{Gram-Certified Resource Continuation for Structured Quantum Representation Audits}
\author{%
Azadeh Alavi$^{1,2,*\dagger}$, Fatemeh Kouchmeshki$^{2,\dagger}$,\\
Hossein Akhoundi$^{2,\dagger}$, and Abdolrahman Alavi$^{2}$\\[2mm]
\small $^1$School of Computing Technologies, RMIT University, Melbourne, VIC, Australia\\
\small $^2$Pattern Recognition Pty Ltd, Melbourne, VIC, Australia\\
\small $^*$Correspondence: \href{mailto:azadeh.alavi@rmit.edu.au}{azadeh.alavi@rmit.edu.au}\\
\small ORCID (A.Az.): \href{https://orcid.org/0000-0002-9565-217X}{0000-0002-9565-217X}\\
\small $^\dagger$These authors contributed equally.}
\date{}

\begin{document}
\maketitle

\begin{abstract}
Dense representation of an $n$-qubit pure state requires $2^n$ complex
amplitudes, precluding dense classical materialization at large $n$.  We
develop Gram-certified resource continuation for structured
quantum-representation workloads and ask when a solution obtained under a
lower-cost resource model remains a justified initialization for a richer one.
For coarse and fine state ensembles connected by a declared isometry, fine,
coarse, and cross complex amplitude overlaps form a positive-semidefinite
block Gram matrix.  A signed operator of dimension at most twice the sample
count has the nonzero signed spectrum of the fine density minus the lifted
coarse density, yielding trace- and operator-norm diagnostics without
constructing either density operator.  We prove that a coarse weighted
spectral flag with objective suboptimality $\delta_c$ has fine-level
suboptimality at most $\delta_c+2\varepsilon$, where $\varepsilon$ is the
empirical trace distance; the factor two is attainable.  We distinguish
encoder change from exact feasible-family prolongation, give a gap-dependent
subspace-stability test, and show that continuation cannot overcome a final
Schmidt-rank ceiling.  In deterministic synthetic controls over an
8-to-40-qubit ladder, exact ancilla lifts agree to numerical precision.
Transferred initialization reduces final-rung block updates from 30 to 20,
but the complete cascade costs $4.80$--$5.43$ times a direct final-rung solve,
without material objective improvement.  Reordering eight Bell pairs reduces
the maximum matrix-product-state bond from 256 to 2.  Thus continuation is
justified only when cross-rung mismatch, feasible-family inclusion, topology,
and total work jointly satisfy prespecified audits.  These noise-free
classical results neither establish generic 40-qubit simulability nor claim
hardware performance or quantum advantage.
\end{abstract}

\medskip
\noindent\textbf{Keywords:} certified warm starting; resource continuation;
flag manifolds; amplitude Gram matrices; tensor networks; quantum machine
learning audits

\section{Introduction}

\citet{alavi_invariance_2026} identified noiseless fidelity kernels and
variational-return scores with Hermitian projector, density, Grassmann, and
flag geometry.  That invariance audit answers
\emph{what object is being compared and what information it removes}.
Building on that foundation, amplitude-Gram and bounded tensor contractions
permit empirical spectral flags to be evaluated without a dense
Hilbert-space operator.  All reported quantities are obtained from
deterministic, noise-free structured-state calculations on classical
hardware, without a physical quantum device.  The certificates identify when
a declared cross-rung transfer preserves the empirical objective, conditional
on access to the required structured contractions and complex overlaps.  The present paper
addresses the next computational question: \emph{when is a solution obtained
under a lower-cost resource model a scientifically justified initialization for a
richer one?}

The answer cannot be ``whenever parameters have the same names.''  Increasing
qubit width can change the feature map and empirical density.  Increasing a
tensor rank crosses strata of a bounded-rank algebraic variety.  Changing site
order or tree topology can alter the relevant cut ranks exponentially.  Even
when final-rung iterations decrease, the earlier rungs may make a cascade more
expensive than a direct solve.  We therefore replace informal warm starting by
an audited directed system of problems, isometries, feasible-family
inclusions, and rejection certificates.

Our contributions are:
\begin{enumerate}
  \item a trace-distance transfer theorem for multirank weighted flags and an
  exact $2m$-dimensional amplitude-Gram realization of its certificate;
  \item objective-preserving prolongations for nested bounded-bond families
  and product-flag block merging, with an explicit warning that exact-rank
  tensor manifolds are not nested;
  \item a certificate-driven continuation algorithm that separates encoder
  change, resource enrichment, and local correction;
  \item executed synthetic controls through forty qubits, including matched
  warm/cold work, exact block merging, landmark bounds, and an ordering
  counterexample; and
  \item an explicit boundary between transfer certification and classical
  tractability: when the required overlaps and contractions are available,
  acceptance certifies the declared empirical-objective transfer, whereas
  rejection invalidates only that certificate and does not imply quantum
  advantage.
\end{enumerate}

A necessary geometric distinction is that the full flag manifold in
$\C^{2^n}$ is not a Cartesian product of small local
flag manifolds.  A product of local flags is a restricted, usually separable
ansatz.  Its block subproblems are conditional on all other blocks, not
independent.  Block-coordinate or alternating Riemannian optimization can be
scalable only after this restriction is declared; it does not decompose an
arbitrary global flag.

% Theory and results section included from main.tex.
% Expected preamble: amsmath, amssymb, amsthm, mathtools, bm.
% Expected theorem environments: theorem, proposition, lemma, corollary, remark.
% Expected notation: \C, \tr, \rank, \ket, \bra, \braket, \norm, \abs.

\section{Results}
\label{sec:resource-continuation}

We address when a flag learned at a lower-cost resource level can be
transferred to a richer level without changing the empirical object being
optimized.  Our answer separates three statements that must not be conflated:
(i) transfer of an empirical density between Hilbert spaces, (ii) inclusion of
two structured feasible families at fixed Hilbert space, and (iii) local
geometric stability of individual spectral subspaces.  Only the first two give
objective-value guarantees without an eigengap.

Throughout, $M^\dagger$ denotes the conjugate transpose of $M$,
$I_d$ is the identity on a $d$-dimensional space,
$M\succeq0$ denotes that the Hermitian matrix $M$ is positive semidefinite,
$\operatorname{ran}M$ is its range, and $\lambda_1(M)\ge\lambda_2(M)\ge\cdots$
are its eigenvalues in nonincreasing order.  The symbols $\tr$ and $\rank$
denote trace and rank, respectively.  For vectors, $\norm{\cdot}_2$ is the
Euclidean norm; for matrices, its operator-norm meaning is stated below.

\subsection{Flags, effects, and the two empirical problems}

Let $\mathcal H_c$ and $\mathcal H_f$ be finite-dimensional complex Hilbert
spaces for a coarse and a fine resource level.  Fix ranks
$1\le r_1<\cdots<r_L\le\dim\mathcal H_c$ and nonnegative weights
$\alpha_1,\ldots,\alpha_L$ satisfying
\begin{equation}
  \beta:=\sum_{\ell=1}^{L}\alpha_\ell\le 1.
  \label{eq:rc-weight-normalization}
\end{equation}
A flag $\mathcal F=(P_1,\ldots,P_L)$ consists of Hermitian projectors with
\begin{equation}
 P_\ell P_j=P_{\min\{\ell,j\}},
 \qquad \rank(P_\ell)=r_\ell .
 \label{eq:rc-flag-constraints}
\end{equation}
Its return effect and empirical objective are
\begin{equation}
 A_{\mathcal F}=\sum_{\ell=1}^{L}\alpha_\ell P_\ell,
 \qquad
 \mathcal J_\rho(\mathcal F)=\tr(\rho A_{\mathcal F}).
 \label{eq:rc-flag-objective}
\end{equation}
Nesting and Eq.~\eqref{eq:rc-weight-normalization} imply
$0\preceq A_{\mathcal F}\preceq I$.  We maximize
$\mathcal J_\rho$; all continuation inequalities below reverse in the obvious
way for a minimization convention.

Let $V:\mathcal H_c\to\mathcal H_f$ be an isometry,
$V^\dagger V=I_{\mathcal H_c}$.  It lifts a coarse flag to
\begin{equation}
 V\mathcal F V^\dagger
 :=(VP_1V^\dagger,\ldots,VP_LV^\dagger).
 \label{eq:rc-lifted-flag}
\end{equation}
The lifted operators remain projectors, retain their ranks, and remain nested.
For density operators $\rho_c$ and $\rho_f$, define
\begin{equation}
 \tau:=V\rho_cV^\dagger,
 \qquad
 \varepsilon:=\frac12\norm{\rho_f-\tau}_1 .
 \label{eq:rc-transfer-distance}
\end{equation}
Here and throughout, $\norm{\cdot}_1$ denotes the trace (Schatten-1) norm.
The quantity $\varepsilon$ measures mismatch of the \emph{training empirical
density}; by itself it is not a bound on every unseen query state.

\begin{lemma}[Effect form of trace-distance control]
\label{lem:rc-effect-trace-distance}
For every effect $0\preceq A\preceq I$,
\begin{equation}
 \abs{\tr[(\rho_f-\tau)A]}\le\varepsilon.
 \label{eq:rc-effect-bound}
\end{equation}
Consequently, for every coarse flag $\mathcal F$,
\begin{equation}
 \abs{
 \mathcal J_{\rho_f}(V\mathcal F V^\dagger)
 -\mathcal J_{\rho_c}(\mathcal F)}\le\varepsilon .
 \label{eq:rc-fixed-flag-transfer}
\end{equation}
\end{lemma}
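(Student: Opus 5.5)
The plan is to reduce everything to the Jordan (positive/negative part) decomposition of the traceless Hermitian operator $\Delta:=\rho_f-\tau$. First I check that $\tau=V\rho_cV^\dagger$ is a density operator on $\mathcal H_f$: it is positive semidefinite because $\rho_c\succeq0$. Its trace is $\tr(V\rho_cV^\dagger)=\tr(\rho_c V^\dagger V)=\tr\rho_c=1$ by the isometry relation. Hence $\tr\Delta=0$.

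Next I write $\Delta=\Delta_+-\Delta_-$ with $\Delta_\pm\succeq0$ supported on orthogonal spectral subspaces. Then $\tr\Delta_+=\tr\Delta_-$ because $\Delta$ is traceless. Also $\norm{\Delta}_1=\tr\Delta_++\tr\Delta_-$, so $\tr\Delta_+=\tr\Delta_-=\varepsilon$.

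For an effect $0\preceq A\preceq I$, positivity gives $0\le\tr(\Delta_\pm A)\le\tr\Delta_\pm=\varepsilon$. Here I use that $\tr(BA)\ge0$ and $\tr(B(I-A))\ge0$ whenever $B\succeq0$, which follows by writing $\tr(BA)=\tr(B^{1/2}AB^{1/2})$. Therefore $\tr(\Delta A)=\tr(\Delta_+A)-\tr(\Delta_-A)$ lies in $[-\varepsilon,\varepsilon]$, which is Eq.~\eqref{eq:rc-effect-bound}.

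For the flag consequence, I take $A:=A_{V\mathcal FV^\dagger}=\sum_\ell\alpha_\ell VP_\ell V^\dagger=VA_{\mathcal F}V^\dagger$. This is an effect because the lifted projectors are nested projectors and $\beta\le1$, as already noted after Eq.~\eqref{eq:rc-flag-objective}. Alternatively, one can see it directly: $0\preceq VA_{\mathcal F}V^\dagger\preceq VV^\dagger\preceq I$. By cyclicity and $V^\dagger V=I$,
\begin{equation*}
\tr(\tau\,VA_{\mathcal F}V^\dagger)=\tr(V\rho_cV^\dagger VA_{\mathcal F}V^\dagger)=\tr(\rho_cA_{\mathcal F})=\mathcal J_{\rho_c}(\mathcal F).
\end{equation*}
Thus $\mathcal J_{\rho_f}(V\mathcal FV^\dagger)-\mathcal J_{\rho_c}(\mathcal F)=\tr(\Delta A)$, and Eq.~\eqref{eq:rc-fixed-flag-transfer} follows from the first part.

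There is no serious obstacle here. The only point needing care is confirming that $\tau$ has unit trace, which requires $V$ to be an isometry and not merely a contraction, so that $\tr\Delta_+=\tr\Delta_-$. If $\tau$ were subnormalized, only the weaker bound $\max(\tr\Delta_+,\tr\Delta_-)\le\norm{\Delta}_1$ would be available.
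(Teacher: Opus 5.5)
Your proposal is correct and follows the paper's proof essentially step for step: the Jordan decomposition of $\rho_f-\tau$ with positive and negative parts each of trace $\varepsilon$, the bound $0\le\tr(\Delta_\pm A)\le\varepsilon$ for any effect $A$, and then $A_{V\mathcal F V^\dagger}=VA_{\mathcal F}V^\dagger$ with cyclicity to obtain the flag consequence. Your explicit check that $\tau$ has unit trace, using $V^\dagger V=I$, supplies the tracelessness of $\rho_f-\tau$, which the paper's proof uses without stating why it holds.
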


\begin{proof}
Set $D=\rho_f-\tau$ and write its Jordan decomposition as
$D=D_+-D_-$, where $D_\pm\succeq0$ have orthogonal supports.  Since
$\tr D=0$,
\[
 \tr D_+=\tr D_-=\tfrac12\norm D_1=\varepsilon.
\]
For $0\preceq A\preceq I$, both $\tr(D_+A)$ and $\tr(D_-A)$ lie in
$[0,\varepsilon]$, proving Eq.~\eqref{eq:rc-effect-bound}.  For a lifted
flag, use
$A_{V\mathcal F V^\dagger}=VA_{\mathcal F}V^\dagger$ and cyclicity:
\[
 \tr(\tau VA_{\mathcal F}V^\dagger)
 =\tr(V\rho_cA_{\mathcal F}V^\dagger)
 =\tr(\rho_cA_{\mathcal F}).
\]
The lifted flag effect is an effect, so Eq.~\eqref{eq:rc-effect-bound} applies.
\end{proof}

If weights with $\beta>1$ are required, apply the lemma to
$A_{\mathcal F}/\beta$; every occurrence of $\varepsilon$ below becomes
$\beta\varepsilon$.

\begin{remark}[Noisy encoders and general ensembles]
\label{rem:rc-noisy-encoders}
Lemma~\ref{lem:rc-effect-trace-distance} compares two arbitrary fine-level
density operators; purity of the encoded states is never used.  If the fine
encoder is a noisy channel applied to declared inputs, $\rho_f$ is the
empirical average of its outputs and the fixed-flag bound holds with the same
$\varepsilon$; because trace distance contracts under completely positive
trace-preserving maps, any further processing applied identically to both
arms can only decrease $\varepsilon$.  The Gram construction of
Section~\ref{subsec:rc-2m-certificate} extends to finite pure-state ensemble
decompositions by replacing uniform columns $\Phi/\sqrt m$ with weighted
columns $\sqrt{p_i}\,\ket{\psi_i}$.  If the fine and coarse decompositions
contain $m_f$ and $m_c$ components, respectively, the reduced signed matrix
has dimension at most $m_f+m_c$; the $2m$ statement is the equal-size special
case.  This extension still requires complex cross overlaps between the
chosen decomposition vectors, which are not supplied by density matrices or
fidelity-only data alone.  Only the optimum-transfer identity in
Theorem~\ref{thm:rc-flag-transfer}, which uses spectrum preservation under
isometric conjugation, is specific to isometric lifts.
\end{remark}

\begin{theorem}[Certified transfer of a spectral flag]
\label{thm:rc-flag-transfer}
Let
\begin{equation}
 \mathcal J_c^*:=\max_{\mathcal F_c}\mathcal J_{\rho_c}(\mathcal F_c),
 \qquad
 \mathcal J_f^*:=\max_{\mathcal F_f}\mathcal J_{\rho_f}(\mathcal F_f),
 \label{eq:rc-two-optima}
\end{equation}
where the maxima range over all flags with the ranks in
Eq.~\eqref{eq:rc-flag-constraints} on their respective Hilbert spaces.  Then
\begin{equation}
 \abs{\mathcal J_f^*-\mathcal J_c^*}\le\varepsilon.
 \label{eq:rc-optimum-stability}
\end{equation}
If a coarse flag $\widehat{\mathcal F}_c$ is $\delta_c$-suboptimal,
\begin{equation}
 \mathcal J_c^*-
 \mathcal J_{\rho_c}(\widehat{\mathcal F}_c)\le\delta_c,
 \label{eq:rc-coarse-suboptimality}
\end{equation}
then its isometric lift obeys the fine-level certificate
\begin{equation}
 \boxed{
 \mathcal J_f^*-
 \mathcal J_{\rho_f}(V\widehat{\mathcal F}_cV^\dagger)
 \le \delta_c+2\varepsilon .}
 \label{eq:rc-main-certificate}
\end{equation}
No spectral gap is required for
Eqs.~\eqref{eq:rc-optimum-stability}--\eqref{eq:rc-main-certificate}.
\end{theorem}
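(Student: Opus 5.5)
The plan is to first show $\mathcal J_f^*=\max_{\mathcal F}\mathcal J_{\tau}(\mathcal F)$, with the maximum over fine flags, and that this common value equals $\mathcal J_c^*$. The optimum bound Eq.~\eqref{eq:rc-optimum-stability} then follows from Lemma~\ref{lem:rc-effect-trace-distance}, and the certificate Eq.~\eqref{eq:rc-main-certificate} follows from a three-term telescoping.

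\textbf{Step 1: spectral (Ky Fan) form of the optimum.} For any density $\rho$ on a space of dimension at least $r_L$, write $s_r(\rho)=\sum_{i\le r}\lambda_i(\rho)$. Then
\[
 \max_{\mathcal F}\mathcal J_\rho(\mathcal F)=\sum_{\ell}\alpha_\ell\, s_{r_\ell}(\rho).
\]
The upper bound holds because $\tr(\rho P_\ell)\le s_{r_\ell}(\rho)$ for each rank-$r_\ell$ projector by Ky Fan's maximum principle, and the weights are nonnegative. It is attained simultaneously for all $\ell$ by the nested flag of leading eigenvectors of $\rho$, which is a valid flag because the ranks are increasing. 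Ties in the spectrum are harmless: any choice of eigenbasis attains the bound, so no gap is needed.

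\textbf{Step 2: spectrum preservation under the lift.} Since $V^\dagger V=I$, the nonzero spectrum of $\tau=V\rho_cV^\dagger$ coincides with that of $\rho_c$; the remaining eigenvalues are zeros. Hence $s_r(\tau)=s_r(\rho_c)$ for all $r\le\dim\mathcal H_c$, and by Step 1, $\max_{\mathcal F}\mathcal J_\tau(\mathcal F)=\mathcal J_c^*$. This is the only step using the isometric structure, as flagged in Remark~\ref{rem:rc-noisy-encoders}. Alternatively, one can note that the lifted optimal coarse flag attains the value $\mathcal J_c^*$ against $\tau$, by the cyclicity computation in Lemma~\ref{lem:rc-effect-trace-distance}. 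For the reverse inequality, a fine flag $P_\ell$ gives $\tr(\tau P_\ell)=\tr(\rho_c V^\dagger P_\ell V)$, where $0\preceq V^\dagger P_\ell V\preceq I$ has trace at most $r_\ell$. Then $\tr(\rho_c V^\dagger P_\ell V)\le s_{r_\ell}(\rho_c)$, by Ky Fan's principle in its effect form.

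\textbf{Step 3: optimum stability.} For any fine flag $\mathcal F$, Lemma~\ref{lem:rc-effect-trace-distance} applied to the effect $A_{\mathcal F}$ gives $\abs{\mathcal J_{\rho_f}(\mathcal F)-\mathcal J_\tau(\mathcal F)}\le\varepsilon$. Taking the maximum over fine flags on both sides yields $\abs{\mathcal J_f^*-\mathcal J_c^*}\le\varepsilon$, using that a maximum of functions that are uniformly $\varepsilon$-close changes by at most $\varepsilon$.

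\textbf{Step 4: the certificate.} Decompose
\[
 \mathcal J_f^*-\mathcal J_{\rho_f}(V\widehat{\mathcal F}_cV^\dagger)
 =(\mathcal J_f^*-\mathcal J_c^*)+(\mathcal J_c^*-\mathcal J_{\rho_c}(\widehat{\mathcal F}_c))+(\mathcal J_{\rho_c}(\widehat{\mathcal F}_c)-\mathcal J_{\rho_f}(V\widehat{\mathcal F}_cV^\dagger)).
\]
The three terms are bounded, respectively, by $\varepsilon$ (Step 3), $\delta_c$ (Eq.~\eqref{eq:rc-coarse-suboptimality}), and $\varepsilon$ (Eq.~\eqref{eq:rc-fixed-flag-transfer}). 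None of these bounds uses an eigengap.

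The main obstacle is Step 2's reverse inequality, namely ruling out that fine flags exploit directions outside $\operatorname{ran}V$. The Ky Fan argument handles this cleanly, since $\tau$ is supported in $\operatorname{ran}V$. One should also check the convention $r_L\le\dim\mathcal H_c\le\dim\mathcal H_f$, so that the flags exist on both levels.
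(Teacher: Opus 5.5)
Your proposal is correct and follows the paper's proof essentially step for step. Ky Fan plus spectrum preservation under $V$ gives $\max_{\mathcal F_f}\mathcal J_\tau(\mathcal F_f)=\mathcal J_c^*$, the effect lemma applied uniformly over fine flags gives $\abs{\mathcal J_f^*-\mathcal J_c^*}\le\varepsilon$, and the same $\varepsilon+\delta_c+\varepsilon$ split yields the certificate; your compression argument via $V^\dagger P_\ell V$ is only an optional second route to the first identity. One small slip: your opening sentence states the target as $\mathcal J_f^*=\max_{\mathcal F}\mathcal J_\tau(\mathcal F)$, which is false in general, whereas what Step~2 actually proves and Step~3 uses is $\max_{\mathcal F}\mathcal J_\tau(\mathcal F)=\mathcal J_c^*$.
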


\begin{proof}
The nonzero eigenvalues of $\tau=V\rho_cV^\dagger$ equal those of $\rho_c$;
the remaining fine-space eigenvalues are zero.  The Ky Fan maximum principle
and nonnegative $\alpha_\ell$ therefore give
\begin{equation}
 \max_{\mathcal F_f}\mathcal J_\tau(\mathcal F_f)
 =\sum_{\ell=1}^{L}\alpha_\ell
   \sum_{j=1}^{r_\ell}\lambda_j(\tau)
 =\mathcal J_c^* .
 \label{eq:rc-tau-coarse-optimum}
\end{equation}
Directions outside $\operatorname{ran}V$ contribute only zero eigenvalues.
Lemma~\ref{lem:rc-effect-trace-distance}, applied to every fine flag effect,
gives
\[
 \mathcal J_\tau(\mathcal F_f)-\varepsilon
 \le\mathcal J_{\rho_f}(\mathcal F_f)
 \le\mathcal J_\tau(\mathcal F_f)+\varepsilon.
\]
Taking maxima and using Eq.~\eqref{eq:rc-tau-coarse-optimum} proves
Eq.~\eqref{eq:rc-optimum-stability}.  Finally,
Lemma~\ref{lem:rc-effect-trace-distance} and
Eq.~\eqref{eq:rc-coarse-suboptimality} yield
\begin{align*}
 \mathcal J_f^*-
 \mathcal J_{\rho_f}(V\widehat{\mathcal F}_cV^\dagger)
 &\le (\mathcal J_c^*+\varepsilon)
      -(\mathcal J_{\rho_c}(\widehat{\mathcal F}_c)-\varepsilon)\\
 &\le\delta_c+2\varepsilon.
\end{align*}
\end{proof}

The factor two in Eq.~\eqref{eq:rc-main-certificate} has a transparent origin:
one perturbation allowance compares the two unrestricted optima and another
compares the transferred coarse flag with its fine score.  The bound is an
empirical training-return statement.  It is not a generalization bound, a test
accuracy guarantee, or a claim that optimization reaches the fine global
optimum.

\begin{remark}[The constant two is attained]
\label{rem:rc-sharpness}
Let $\mathcal H_c=\C^2$ with $\rho_c=I_2/2$, let $V$ embed $\C^2$ into
$\C^3$ as the first two coordinates, and let
$\rho_f=\operatorname{diag}(\tfrac12-\varepsilon,\tfrac12+\varepsilon,0)$
for $\varepsilon\in(0,\tfrac12]$.  Then
$\tfrac12\norm{\rho_f-V\rho_cV^\dagger}_1=\varepsilon$.  For $L=1$,
$r_1=1$, $\alpha_1=1$, every rank-one coarse flag is optimal, so
$\delta_c=0$; if ties are broken adversarially, the returned flag
$\widehat P=e_1e_1^\dagger$ (with $e_1$ the first standard basis vector)
lifts to fine score $\tfrac12-\varepsilon$, while
$\mathcal J_f^*=\tfrac12+\varepsilon$.  The fine suboptimality equals
$2\varepsilon$, so Eq.~\eqref{eq:rc-main-certificate} holds with equality.
The constant cannot be improved without additional assumptions on coarse
spectral gaps or on tie-breaking; degenerate coarse spectra with adversarially
broken ties are the extremal geometry.
\end{remark}

\subsection{A \texorpdfstring{$2m$}{2m}-dimensional amplitude-Gram certificate}
\label{subsec:rc-2m-certificate}

Suppose the same $m$ training samples are encoded at both rungs, with normalized
state columns
\begin{equation}
 \Phi_f=[\ket{\psi^f_1}\ \cdots\ \ket{\psi^f_m}],
 \qquad
 \Phi_c=[\ket{\psi^c_1}\ \cdots\ \ket{\psi^c_m}],
 \label{eq:rc-state-columns}
\end{equation}
and empirical densities
$\rho_f=\Phi_f\Phi_f^\dagger/m$ and
$\rho_c=\Phi_c\Phi_c^\dagger/m$.  Define
\begin{equation}
 X=\frac1{\sqrt m}\begin{bmatrix}\Phi_f&V\Phi_c\end{bmatrix},
 \qquad
 S=\begin{bmatrix}I_m&0\\0&-I_m\end{bmatrix}.
 \label{eq:rc-xs-definitions}
\end{equation}
Then $\rho_f-V\rho_cV^\dagger=XSX^\dagger$.  Its Gram matrix is
\begin{equation}
 K=X^\dagger X
 =\begin{bmatrix}
 G_f&C_{fc}\\ C_{fc}^\dagger&G_c
 \end{bmatrix}\succeq0,
 \label{eq:rc-block-gram}
\end{equation}
where
\begin{equation}
 G_f=\frac1m\Phi_f^\dagger\Phi_f,
 \quad
 G_c=\frac1m\Phi_c^\dagger\Phi_c,
 \quad
 C_{fc}=\frac1m\Phi_f^\dagger V\Phi_c.
 \label{eq:rc-blocks}
\end{equation}
These are complex amplitude overlaps, not squared fidelities.

\begin{theorem}[Exact Gram-implicit transfer certificate]
\label{thm:rc-gram-certificate}
Let $R$ be any Gram factor satisfying $K=R^\dagger R$; for example,
$R=K^{1/2}$.  Set
\begin{equation}
 B:=RSR^\dagger .
 \label{eq:rc-small-difference}
\end{equation}
Then the nonzero eigenvalues of $B$ and
$\rho_f-V\rho_cV^\dagger$ agree, including signs and multiplicities.  Hence
\begin{equation}
 \boxed{
 \varepsilon=\frac12\norm{B}_1,\qquad
 \zeta:=\norm{\rho_f-V\rho_cV^\dagger}_2=\norm{B}_2 .}
 \label{eq:rc-small-norms}
\end{equation}
Here and below, $\norm{\cdot}_2$ denotes the operator (spectral) norm; the
Frobenius norm is not used in this paper.  Both transfer diagnostics are
therefore obtained from a matrix of dimension at most $2m$, without
materializing a vector or operator in $\mathcal H_f$.
\end{theorem}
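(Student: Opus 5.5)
The plan is to reduce the statement to the standard fact that $MN$ and $NM$ have the same nonzero eigenvalues with the same algebraic multiplicities. First I will verify the factorization $\rho_f-V\rho_cV^\dagger=XSX^\dagger$ directly from Eq.~\eqref{eq:rc-xs-definitions}. Indeed $XSX^\dagger=\frac1m(\Phi_f\Phi_f^\dagger-V\Phi_c\Phi_c^\dagger V^\dagger)$, which is $\rho_f-V\rho_cV^\dagger$. Next, since $K=X^\dagger X=R^\dagger R$, I will produce a partial isometry linking $X$ and $R$. Both have the same Gram matrix, so the map $U:\operatorname{ran}R\to\operatorname{ran}X$ defined by $Rx\mapsto Xx$ is well defined and isometric, because $\norm{Rx}_2^2=x^\dagger Kx=\norm{Xx}_2^2$. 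Hence $X=UR$ for a partial isometry $U$ with $U^\dagger U$ equal to the projector onto $\operatorname{ran}R$. It follows that $XSX^\dagger=URSR^\dagger U^\dagger=UBU^\dagger$.

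The key step is to show that $UBU^\dagger$ and $B$ have the same nonzero spectrum with signs and multiplicities. Since $B=RSR^\dagger$ has range contained in $\operatorname{ran}R$, and $B$ is Hermitian, it is supported on $\operatorname{ran}R$. There $U$ acts isometrically onto $\operatorname{ran}X$. So $UBU^\dagger$ is unitarily equivalent to $B$ restricted to $\operatorname{ran}R$, padded with zeros on the orthogonal complements. Unitary equivalence preserves eigenvalues and multiplicities, and it preserves signs because both operators are Hermitian. As an alternative route, I can apply the $MN/NM$ lemma twice. Write $XSX^\dagger=(XS)(X^\dagger)$, whose nonzero spectrum equals that of $X^\dagger XS=KS=R^\dagger RS$. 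That in turn equals the nonzero spectrum of $RSR^\dagger=B$. Hermiticity of both ends makes the algebraic multiplicities geometric, so no Jordan-block issue arises even though $KS$ is not Hermitian.

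The norm identities then follow immediately. The trace norm is the sum of absolute values of eigenvalues, and the operator norm is the maximum absolute eigenvalue, for Hermitian matrices. These quantities ignore zero eigenvalues, so $\frac12\norm{B}_1=\varepsilon$ and $\norm{B}_2=\zeta$. The dimension claim holds because $B$ is $2m\times 2m$ when $R=K^{1/2}$. For a general factor, $R$ may have any number of rows $k$, and one can take $k=\rank K\le 2m$. Finally, I will note that $K$ is built only from the overlaps in Eq.~\eqref{eq:rc-blocks}, so no object in $\mathcal H_f$ is materialized.

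I expect the main obstacle to be the care needed with multiplicities, since $KS$ is non-Hermitian and the $MN/NM$ lemma is usually stated for algebraic multiplicity. I will handle it first via the partial-isometry conjugation argument above. That argument gives unitary equivalence directly and avoids the issue.
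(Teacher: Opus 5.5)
Your proposal is correct and follows the paper's own proof essentially verbatim. The paper also defines the isometry $U(Rz)=Xz$ from $\operatorname{ran}R$ onto $\operatorname{ran}X$ via $\norm{Rz}^2=z^\dagger Kz=\norm{Xz}^2$, extends it to a partial isometry, and uses Hermiticity of $XSX^\dagger=UBU^\dagger$ to preserve the signed nonzero eigenvalues; your $MN/NM$ chain through $KS$, and your remark that the $2m$ bound refers to a factor such as $K^{1/2}$ (or one with $\rank K$ rows), are valid but unnecessary additions.
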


\begin{proof}
Because $X^\dagger X=R^\dagger R$, the rule
$U(Rz)=Xz$ defines an isometry from $\operatorname{ran}R$ onto
$\operatorname{ran}X$: it is well defined and norm preserving since
\[
 \norm{Rz}^2=z^\dagger Kz=\norm{Xz}^2.
\]
Extend $U$ as a partial isometry on the ambient factor space.  Since
$\operatorname{ran}(RSR^\dagger)\subseteq\operatorname{ran}R$,
\begin{equation}
 XSX^\dagger=U(RSR^\dagger)U^\dagger=UBU^\dagger.
 \label{eq:rc-partial-isometry}
\end{equation}
Isometric conjugation preserves all nonzero singular values; here both sides
are Hermitian, so it also preserves the signed nonzero eigenvalues.  The trace
and operator norm identities follow.
\end{proof}

Equation~\eqref{eq:rc-block-gram} requires fine--coarse cross overlaps
$\bra{\psi_i^f}V\ket{\psi_j^c}$.  If those amplitudes cannot be evaluated
efficiently, the certificate is algebraically valid but not an efficient
algorithm.  A fidelity-only oracle does not determine the complex block Gram
matrix.  With noisy estimated overlaps, Hermitian symmetrization, a declared
positive-semidefinite projection, and propagated numerical uncertainty are
required; Eq.~\eqref{eq:rc-small-norms} is exact only for the matrix actually
factored.

\subsection{Exact and approximate resource prolongation}

The preceding theorem changes the empirical density.  A different and simpler
situation occurs when two resource rungs optimize the same density and the same
objective but use nested feasible families.

\begin{proposition}[Prolongation monotonicity]
\label{prop:rc-prolongation}
Let $\mathcal M_a$ and $\mathcal M_b$ be coarse and fine feasible sets with
objectives $J_a$ and $J_b$, and let
$\mathcal P:\mathcal M_a\to\mathcal M_b$ be a declared prolongation.  Write
$J_a^*=\sup_{\theta\in\mathcal M_a}J_a(\theta)$ and similarly for $J_b^*$.

If prolongation is exact,
\begin{equation}
 J_b(\mathcal P\theta)=J_a(\theta)
 \quad\text{for every }\theta\in\mathcal M_a,
 \label{eq:rc-exact-prolongation}
\end{equation}
then
\begin{equation}
 J_b^*\ge J_a^*.
 \label{eq:rc-optimum-monotonicity}
\end{equation}
If instead it is one-sided $\eta$-accurate,
\begin{equation}
 J_b(\mathcal P\theta)\ge J_a(\theta)-\eta
 \quad\text{for every }\theta\in\mathcal M_a,
 \label{eq:rc-approx-prolongation}
\end{equation}
then
\begin{equation}
 J_b^*\ge J_a^*-\eta.
 \label{eq:rc-approx-optimum}
\end{equation}
Moreover, any fine optimizer that accepts only nondecreasing objective values
returns $\widehat\theta_b$ satisfying
\begin{equation}
 J_b(\widehat\theta_b)
 \ge J_b(\mathcal P\widehat\theta_a)
 \ge J_a(\widehat\theta_a)-\eta.
 \label{eq:rc-algorithm-monotonicity}
\end{equation}
If both rungs share an exact unrestricted upper certificate $C$, their optimal
certificate gaps obey
\begin{equation}
 C-J_b^*\le C-J_a^*+\eta.
 \label{eq:rc-gap-monotonicity}
\end{equation}
\end{proposition}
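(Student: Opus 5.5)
The proof is elementary order-theoretic bookkeeping on suprema, and I would carry it out in four steps matching the four displayed conclusions. Exact prolongation is the special case $\eta=0$ of Eq.~\eqref{eq:rc-approx-prolongation}, so I would prove the approximate statement and read off Eq.~\eqref{eq:rc-optimum-monotonicity} from it.

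\emph{Optimum monotonicity.} Fix any $\theta\in\mathcal M_a$. Since $\mathcal P\theta\in\mathcal M_b$, the definition of the supremum gives
\[
 J_b^*\ge J_b(\mathcal P\theta)\ge J_a(\theta)-\eta .
\]
This holds for every $\theta\in\mathcal M_a$, so taking the supremum over $\theta$ on the right yields $J_b^*\ge J_a^*-\eta$, which is Eq.~\eqref{eq:rc-approx-optimum}. Setting $\eta=0$ gives Eq.~\eqref{eq:rc-optimum-monotonicity}. The argument also covers infinite suprema: if $J_a^*=+\infty$, the chain forces $J_b^*=+\infty$.

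\emph{Algorithmic monotonicity.} I would formalize ``accepts only nondecreasing objective values'' as follows. The fine optimizer is initialized at $\theta_b^{(0)}=\mathcal P\widehat\theta_a$ and produces iterates with $J_b(\theta_b^{(k+1)})\ge J_b(\theta_b^{(k)})$. The returned point $\widehat\theta_b$ is one of these iterates, so induction gives $J_b(\widehat\theta_b)\ge J_b(\mathcal P\widehat\theta_a)$. Then Eq.~\eqref{eq:rc-approx-prolongation} applied at $\theta=\widehat\theta_a$ supplies the second inequality in Eq.~\eqref{eq:rc-algorithm-monotonicity}.

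\emph{Certificate gaps.} I would read ``exact unrestricted upper certificate $C$'' as a single number with $J_a(\theta)\le C$ and $J_b(\vartheta)\le C$ for all feasible points of both rungs. An example is the unrestricted Ky Fan value from Theorem~\ref{thm:rc-flag-transfer} when both families sit inside the flag manifold with the same density. Under that reading, both gaps $C-J_a^*$ and $C-J_b^*$ are nonnegative and finite. Subtracting Eq.~\eqref{eq:rc-approx-optimum} from $C$ gives Eq.~\eqref{eq:rc-gap-monotonicity} directly.

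The main obstacle is not technical but definitional. The proof must fix precisely what the monotone-acceptance hypothesis and the shared certificate $C$ mean, so that the steps above are valid as stated. For example, an optimizer that may return its best iterate, rather than its last one, still satisfies the bound, and I would note this explicitly.
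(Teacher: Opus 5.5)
Your proposal is correct and follows the paper's own proof essentially step for step: bound $J_b^*\ge J_b(\mathcal P\theta)$ pointwise, substitute the exact or $\eta$-accurate prolongation hypothesis and take the supremum over $\theta$, use monotonicity of the fine solver started at $\mathcal P\widehat\theta_a$, and subtract Eq.~\eqref{eq:rc-approx-optimum} from $C$. Your explicit formalization of the monotone-acceptance hypothesis, your treatment of infinite suprema, and your reading of $C$ as a shared upper bound make precise points that the paper leaves implicit, without changing the argument.
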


\begin{proof}
For each $\theta\in\mathcal M_a$,
$J_b^*\ge J_b(\mathcal P\theta)$.  Substitute either
Eq.~\eqref{eq:rc-exact-prolongation} or
Eq.~\eqref{eq:rc-approx-prolongation} and take the supremum over $\theta$.
The accepted-iterate statement follows from monotonicity of the fine solver;
subtracting Eq.~\eqref{eq:rc-approx-optimum} from $C$ proves the final claim.
\end{proof}

At fixed qubit width, site ordering, topology, training density, flag ranks, and
objective, let $\mathcal T_{\le\bm\chi}$ denote the set of tensor trains (or
matrix product states) whose bond ranks are bounded componentwise by
$\bm\chi$.  These cap sets are nested:
\begin{equation}
 \mathcal T_{\le\bm\chi}
 \subseteq \mathcal T_{\le\bm\chi'}
 \quad\text{whenever}\quad
 \chi_j\le\chi'_j\ \text{for every bond }j.
 \label{eq:rc-rank-variety-inclusion}
\end{equation}
Zero-padding the virtual indices supplies an exact representation-level
prolongation, so Proposition~\ref{prop:rc-prolongation} applies with $\eta=0$.
The same argument applies to compatible block refinement or topology enrichment
only when an explicit score-preserving injection is given.  Merely copying
parameters between two unrelated ansatzes is not such an injection.

\begin{remark}[The fixed-rank stratum caveat]
\label{rem:rc-fixed-rank-caveat}
The bounded-rank set $\mathcal T_{\le\bm\chi'}$ is a stratified algebraic
variety, whereas tensors of one \emph{exact} TT-rank $\bm\chi'$ form a smooth
manifold under standard regularity conditions
\citep{holtz_rohwedder_schneider_2012}.  Exact-rank manifolds are not nested:
$\mathcal T_{=\bm\chi}\not\subset\mathcal T_{=\bm\chi'}$ for
$\bm\chi<\bm\chi'$.  A zero-padded rank-$\bm\chi$ tensor lies on a lower-rank
stratum, not at a regular point of the rank-$\bm\chi'$ manifold.  Consequently,
fixed-rank tangent spaces, gauges, and retractions cannot automatically be
evaluated there.  A practical rank-growth step should activate new directions,
for example by a two-site SVD or residual-informed subspace enrichment inspired
by rank-adaptive TT/AMEn methods \citep{dolgov_savostyanov_amen}, followed by a
safeguarded objective check.  Proposition~\ref{prop:rc-prolongation} certifies
the represented warm-start value; it does not prove convergence of that
rank-activation heuristic.
\end{remark}

This distinction favors a fixed-width ladder in which all forty qubits are
present from the first rung and bond caps, block partitions, tree topology, or
flag ranks are enriched afterward.  Such a ladder keeps the empirical target
fixed more often than a direct ``eight-to-forty qubit'' parameter copy.
Flag-rank growth is objective preserving only when all rungs are prefixes of
one declared master flag and the weights on previously present projectors are
unchanged.  Altering the rank list, renormalizing its weights, or increasing the
training/landmark count changes the optimization problem and must instead be
handled by an explicit approximate-prolongation or density-transfer bound.
Riemannian continuation and multilevel optimization provide relevant
predictor--corrector and coarse-correction frameworks
\citep{seguin_kressner_continuation,sutti_vandereycken_multigrid}, but their
convergence hypotheses must be checked for the particular tensor-flag
objective.

\subsection{When qubit-width continuation is exact}

For $s\ge1$, define the ancilla isometry
\begin{equation}
 V_{n\to n+s}: (\C^2)^{\otimes n}\longrightarrow
 (\C^2)^{\otimes(n+s)},
 \qquad
 V_{n\to n+s}\ket\psi=\ket\psi\otimes\ket0^{\otimes s}.
 \label{eq:rc-ancilla-isometry}
\end{equation}

\begin{proposition}[Exact nested-width embedding]
\label{prop:rc-width-embedding}
For every projector $P$ and normalized state $\ket\psi$,
\begin{align}
 \widetilde P
 &:=V_{n\to n+s}PV_{n\to n+s}^\dagger
   =P\otimes\ket{0^s}\!\bra{0^s},
 \label{eq:rc-projector-ancilla}\\
 \bra\psi P\ket\psi
 &=\bigl(\bra\psi\otimes\bra{0^s}\bigr)
    \widetilde P
    \bigl(\ket\psi\otimes\ket{0^s}\bigr).
 \label{eq:rc-score-ancilla}
\end{align}
The map preserves projector rank, flag nesting, amplitude Gram matrices, and all
weighted flag-return scores.  Thus a width continuation is exact whenever the
fine encoder initially satisfies
\begin{equation}
 \ket{\psi_i^f}=V_{n\to n+s}\ket{\psi_i^c}
 \quad\text{for every training sample }i.
 \label{eq:rc-exact-nested-encoder}
\end{equation}
\end{proposition}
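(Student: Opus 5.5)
The plan is to check everything at the level of the isometry $V:=V_{n\to n+s}$ and its adjoint, using only $V^\dagger V=I$ and the tensor-product form of $V$. First we identify $V$ with the operator $I_{2^n}\otimes\ket{0^s}$, viewed as a map $\C^{2^n}\to\C^{2^n}\otimes\C^{2^s}$. Its adjoint is then $V^\dagger=I\otimes\bra{0^s}$, and $V^\dagger V=\braket{0^s|0^s}\,I=I$. Because tensor products of operators compose factorwise,
\[
VPV^\dagger=(I\otimes\ket{0^s})P(I\otimes\bra{0^s})=P\otimes\ket{0^s}\!\bra{0^s}.
\]
This is Eq.~\eqref{eq:rc-projector-ancilla}. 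The score identity, Eq.~\eqref{eq:rc-score-ancilla}, follows by sandwiching with $V\ket\psi=\ket\psi\otimes\ket{0^s}$. The right-hand side factorizes as $\bra\psi P\ket\psi\cdot\braket{0^s|0^s}^2$, which equals $\bra\psi P\ket\psi$. Equivalently, $V^\dagger(VPV^\dagger)V=P$.

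Next come the structural claims.

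\emph{Projector and rank.} The operator $\widetilde P$ is Hermitian, and $\widetilde P^2=VPV^\dagger VPV^\dagger=VP^2V^\dagger=\widetilde P$, so it is a projector. Its rank satisfies $\rank(P\otimes\ket{0^s}\!\bra{0^s})=\rank(P)\cdot1$; alternatively, $V$ is injective, so it maps $\operatorname{ran}P$ isomorphically onto $\operatorname{ran}\widetilde P$.

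\emph{Nesting.} The relation $\widetilde P_\ell\widetilde P_j=VP_\ell P_jV^\dagger=\widetilde P_{\min\{\ell,j\}}$ gives Eq.~\eqref{eq:rc-flag-constraints}. This matches the general remark after Eq.~\eqref{eq:rc-lifted-flag}.

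\emph{Gram matrices.} For states, $\bra{\psi_i}V^\dagger V\ket{\psi_j}=\braket{\psi_i|\psi_j}$, so every amplitude Gram matrix is unchanged.

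\emph{Weighted scores.} By linearity, $A_{V\mathcal F V^\dagger}=VA_{\mathcal F}V^\dagger$. Therefore
\[
\bra{\psi}V^\dagger A_{V\mathcal FV^\dagger}V\ket\psi=\bra\psi A_{\mathcal F}\ket\psi .
\]
Averaging over samples gives $\mathcal J_{V\rho_cV^\dagger}(V\mathcal FV^\dagger)=\mathcal J_{\rho_c}(\mathcal F)$.

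For the exactness conclusion, assume Eq.~\eqref{eq:rc-exact-nested-encoder}. Then
\[
\rho_f=\frac1m\sum_i V\ket{\psi_i^c}\bra{\psi_i^c}V^\dagger=V\rho_cV^\dagger=\tau,
\]
so $\varepsilon=0$ in Eq.~\eqref{eq:rc-transfer-distance}. Theorem~\ref{thm:rc-flag-transfer} then gives $\mathcal J_f^*=\mathcal J_c^*$, and the lifted flag keeps exactly its coarse score. The certificate Eq.~\eqref{eq:rc-main-certificate} therefore reduces to $\delta_c$. Equivalently, Proposition~\ref{prop:rc-prolongation} applies with $\eta=0$ and $\mathcal P=V(\cdot)V^\dagger$. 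As a consistency check, in Theorem~\ref{thm:rc-gram-certificate} all four Gram blocks coincide. Then $K=\begin{bmatrix}G&G\\G&G\end{bmatrix}$, and $B=RSR^\dagger$ has trace norm zero, because $XSX^\dagger=\rho_f-\tau=0$.

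No step is a genuine obstacle; the computations are routine. The only point needing care is interpretive. "Exact" here means objective-preserving transfer of the \emph{empirical} problem, conditional on Eq.~\eqref{eq:rc-exact-nested-encoder}. The fine problem still ranges over all fine flags, including those not supported on $\operatorname{ran}V$. Showing that its optimum equals the coarse optimum therefore uses the spectrum-preservation step in the proof of Theorem~\ref{thm:rc-flag-transfer}, namely that the extra directions carry only zero eigenvalues. It does not follow from the lift identities alone, and I would make that dependence explicit.
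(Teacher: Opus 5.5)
Your proof is correct and follows the paper's argument. The common steps are the tensor-form computation of $VPV^\dagger$, the identity $V^\dagger V=I$ for the score, isometric conjugation preserving products and ranks (which gives projectors and nesting), and $\langle 0^s|0^s\rangle=1$ for the Gram matrices. The one small difference is in the exactness conclusion: you pass through $\rho_f=V\rho_cV^\dagger$, hence $\varepsilon=0$, and then apply Theorem~\ref{thm:rc-flag-transfer}, whereas the paper reads equality of the optima off the unchanged Gram matrix, whose spectrum fixes the Ky Fan value. Both routes rest on the spectrum-preservation fact you correctly identify as needed.
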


\begin{proof}
The identity in Eq.~\eqref{eq:rc-projector-ancilla} follows directly from the
definition of $V_{n\to n+s}$.  Isometric conjugation preserves products and
ranks, hence it preserves projectors and all nesting relations.  Equation
\eqref{eq:rc-score-ancilla} follows from
$V_{n\to n+s}^\dagger V_{n\to n+s}=I$.  Finally,
\[
 \braket{\psi_i^f}{\psi_j^f}
 =\braket{\psi_i^c}{\psi_j^c}\braket{0^s}{0^s}
 =\braket{\psi_i^c}{\psi_j^c},
\]
so the amplitude Gram matrix and every Gram-derived spectral objective are
unchanged.
\end{proof}

The appended qubits initially encode no new information.  They may subsequently
be activated along a declared homotopy, with Theorem~\ref{thm:rc-flag-transfer}
recomputed between accepted rungs.  If the fine encoder instead injects new
features immediately or changes the old gates, Eq.~\eqref{eq:rc-exact-nested-encoder}
does not hold.  Angle copying is then heuristic, and the cross-rung certificate, not the
shared parameter names, determines whether the warm start is defensible.

\subsection{When the flag itself is geometrically stable}

Objective transfer does not imply that individual eigenvectors are stable.
For a requested boundary with $r_\ell<\dim\mathcal H_f$, let $P_\ell^0$ be
the rank-$r_\ell$ leading spectral projector of
$\tau=V\rho_cV^\dagger$, and $P_\ell^f$ the corresponding projector of
$\rho_f$.  Denote the coarse boundary gap, including any zero eigenvalues, by
\begin{equation}
 g_\ell=\lambda_{r_\ell}(\tau)
          -\lambda_{r_\ell+1}(\tau).
 \label{eq:rc-boundary-gap}
\end{equation}

\begin{proposition}[A computable Davis--Kahan acceptance test]
\label{prop:rc-davis-kahan}
Let $\zeta$ be the Gram-implicit operator norm in
Eq.~\eqref{eq:rc-small-norms}.  If $g_\ell>0$ and
$\zeta<g_\ell/2$, then the fine leading subspace has the same dimension and
\begin{equation}
 \norm{P_\ell^f-P_\ell^0}_2
 \le \frac{2\zeta}{g_\ell}.
 \label{eq:rc-davis-kahan-bound}
\end{equation}
Thus a whole flag is accepted geometrically only if the condition holds at
every requested rank boundary.
\end{proposition}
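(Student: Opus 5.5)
The plan is to treat $\rho_f=\tau+E$ with $E:=\rho_f-V\rho_cV^\dagger$ as a Hermitian perturbation of $\tau$. By Theorem~\ref{thm:rc-gram-certificate} we have $\norm{E}_2=\zeta$, so every ingredient of the argument is an ordinary statement about Hermitian operators on the finite-dimensional space $\mathcal H_f$. The argument has three steps: use Weyl's inequality to show that the boundary gap survives, use a Davis--Kahan $\sin\Theta$ bound to control the projector difference, and then take the conjunction over all boundaries to obtain the flag statement.

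\emph{Step 1 (dimension and gap persistence).} Weyl's inequality gives $\abs{\lambda_j(\rho_f)-\lambda_j(\tau)}\le\zeta$ for every $j$. Hence
\[
\lambda_{r_\ell}(\rho_f)-\lambda_{r_\ell+1}(\rho_f)\ge g_\ell-2\zeta>0
\]
whenever $\zeta<g_\ell/2$. It follows that the rank-$r_\ell$ leading spectral projector $P_\ell^f$ of $\rho_f$ is uniquely defined and has dimension exactly $r_\ell$, the same as $P_\ell^0$. This is the ``same dimension'' clause. Weyl also places the leading $r_\ell$ eigenvalues of $\rho_f$ in $[\lambda_{r_\ell}(\tau)-\zeta,\infty)$ and the remaining eigenvalues of $\tau$ in $(-\infty,\lambda_{r_\ell}(\tau)-g_\ell]$. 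The two intervals are therefore separated by at least $g_\ell-\zeta$.

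\emph{Step 2 (Davis--Kahan).} We apply the Davis--Kahan $\sin\Theta$ theorem in operator norm to the pair ($\rho_f$ restricted to $\operatorname{ran}P_\ell^f$, $\tau$ restricted to $\operatorname{ran}(I-P_\ell^0)$), which are spectrally separated by $\delta\ge g_\ell-\zeta$. This gives
\[
\norm{(I-P_\ell^0)P_\ell^f}_2\le \frac{\norm{E}_2}{\delta}\le\frac{\zeta}{g_\ell-\zeta}.
\]
Because the two projectors have equal rank, $\norm{P_\ell^f-P_\ell^0}_2=\norm{(I-P_\ell^0)P_\ell^f}_2$, which is the sine of the largest principal angle. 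Finally, $\zeta<g_\ell/2$ implies $g_\ell-\zeta>g_\ell/2$, so $\zeta/(g_\ell-\zeta)<2\zeta/g_\ell$, which yields Eq.~\eqref{eq:rc-davis-kahan-bound}.

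The main obstacle is the choice of separation in Step 2: the classical Davis--Kahan statement needs one spectrum to lie in an interval and the other to lie outside it by a margin. I would check this carefully, including the bottom end, since the zero eigenvalues of $\tau$ are counted in the complement block. If a direct citation is awkward, a fallback is the weaker but self-contained Yu--Wang--Samworth form $\norm{\sin\Theta}_2\le 2\norm{E}_2/g_\ell$, which uses only the unperturbed gap and gives the stated constant exactly.

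\emph{Step 3 (flag acceptance).} The flag clause is definitional. Each boundary $\ell$ is controlled by its own gap $g_\ell$, so the bound holds for the whole flag precisely when $\zeta<g_\ell/2$ holds for every requested $\ell$. Because $\zeta$ is Gram-computable by Theorem~\ref{thm:rc-gram-certificate} and $g_\ell$ comes from the coarse spectrum, the test is computable without forming any operator on $\mathcal H_f$.
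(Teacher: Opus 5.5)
Your proof is correct and follows the paper's route: Weyl's inequality for gap persistence, the Davis--Kahan $\sin\Theta$ theorem for $\norm{(I-P_\ell^0)P_\ell^f}_2$, and the equal-rank identity converting that quantity into $\norm{P_\ell^f-P_\ell^0}_2$. The one difference is the separation you use. You take the sharper value $g_\ell-\zeta$ and obtain $\zeta/(g_\ell-\zeta)$ before relaxing it, whereas the paper uses the midpoint separation $g_\ell/2$ directly, which is exactly your fallback. Also, the final relaxation should read $\le$ rather than $<$, so that it covers $\zeta=0$.
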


\begin{proof}
Apply Weyl's eigenvalue perturbation bound to
$\rho_f=\tau+(\rho_f-\tau)$.  The assumption
$\norm{\rho_f-\tau}_2=\zeta<g_\ell/2$ prevents the selected and complementary
clusters from crossing the midpoint of the coarse gap.  The Davis--Kahan
$\sin\Theta$ theorem \citep{davis_kahan_1970} with separation at least
$g_\ell/2$ gives
\[
 \norm{(I-P_\ell^0)P_\ell^f}_2
 \le\frac{2\zeta}{g_\ell}.
\]
For equal-rank orthogonal projectors, this largest sine of a principal angle
equals $\norm{P_\ell^f-P_\ell^0}_2$, proving the claim.
\end{proof}

The bound is sufficient rather than necessary.  When $g_\ell=0$, a rank
boundary splits a degenerate cluster and no basis-independent stability claim
for that rank is available.  One should transport the projector onto the whole
cluster or change the rank ladder.  The objective-value guarantee in
Theorem~\ref{thm:rc-flag-transfer} remains valid even at such a degeneracy.
Likewise, warm-starting a partial eigensolver can save iterations when the
subspace is close, but it does not change the cost of one overlap or remove the
need for a gap-dependent residual tolerance; the locally optimal block
preconditioned conjugate gradient (LOBPCG) method is a standard implementation
option \citep{knyazev_lobpcg}.

\subsection{A path-independent Schmidt-rank limitation}

Resource continuation cannot defeat a final ansatz's Schmidt-rank ceiling.
Place $k$ Bell pairs across one declared MPS cut, with any remaining qubits in
$\ket0$:
\begin{equation}
 \ket{\Psi_k}=2^{-k/2}
 \sum_{z\in\{0,1\}^k}\ket z_L\ket z_R.
 \label{eq:rc-cross-cut-bell}
\end{equation}

\begin{theorem}[Path-independent Bell cap]
\label{thm:rc-bell-cap}
For every normalized state $\ket\phi$ whose Schmidt rank across the declared
cut is at most $\chi$,
\begin{equation}
 \abs{\braket{\phi}{\Psi_k}}^2
 \le \min\!\left\{1,\frac{\chi}{2^k}\right\}.
 \label{eq:rc-bell-cap}
\end{equation}
The bound is attained by retaining and renormalizing any
$\min\{\chi,2^k\}$ matched Schmidt terms.  It is therefore also the exact
optimum over open-boundary MPS with global maximum bond $\chi$.  In particular,
no warm start, block-coordinate schedule, Riemannian retraction, or continuation
path ending in that MPS family can exceed Eq.~\eqref{eq:rc-bell-cap}.
\end{theorem}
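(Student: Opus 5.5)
The bound in Eq.~\eqref{eq:rc-bell-cap} follows from a Cauchy--Schwarz argument on the Schmidt coefficients, after reducing the overlap to a trace against a Bell projector.

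\textbf{Reduction.} Group the qubits into a left block $L$ and a right block $R$ across the declared cut. The $\ket0$ spectator qubits are absorbed into $\ket z_L$ and $\ket z_R$ as fixed tensor factors. Write $\ket{\Psi_k}=2^{-k/2}\sum_z \ket{a_z}\ket{b_z}$, where $\{\ket{a_z}\}$ and $\{\ket{b_z}\}$ are orthonormal families of size $2^k$. Choose a Schmidt decomposition $\ket\phi=\sum_{i=1}^{\chi'}s_i\ket{u_i}\ket{v_i}$ with $\chi'\le\chi$, $s_i\ge0$, and $\sum_i s_i^2=1$. Then
\[
\braket{\phi}{\Psi_k}=2^{-k/2}\sum_i s_i\,\tr\!\bigl(\bar M_i\bigr),
\]
where the relevant object is $M=\sum_i s_i \ket{\bar u_i}\ldots$. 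It is cleaner to phrase everything in matrix form. Identify $\ket\phi$ with the coefficient matrix $\Phi$ of size $d_L\times d_R$, so that $\rank\Phi\le\chi$ and $\norm{\Phi}_{\mathrm{HS}}=1$. Under the same identification, $\ket{\Psi_k}$ becomes $2^{-k/2}W$, where $W=\sum_z \ket{a_z}\bra{\bar b_z}$ is a partial isometry of rank $2^k$. The overlap is then $\braket{\phi}{\Psi_k}=2^{-k/2}\tr(\Phi^\dagger W)$.

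\textbf{Bound.} Apply von Neumann's trace inequality:
\[
\abs{\tr(\Phi^\dagger W)}\le\sum_{i=1}^{\min\{\chi,2^k\}}\sigma_i(\Phi),
\]
since the nonzero singular values of $W$ are all equal to one, and $\Phi$ has at most $\chi$ nonzero singular values. Cauchy--Schwarz then gives
\[
\sum_{i\le p}\sigma_i\le\sqrt{p}\Bigl(\sum_i\sigma_i^2\Bigr)^{1/2}=\sqrt p,
\qquad p=\min\{\chi,2^k\}.
\]
Squaring and multiplying by $2^{-k}$ yields $\min\{\chi,2^k\}/2^k=\min\{1,\chi/2^k\}$, which is Eq.~\eqref{eq:rc-bell-cap}.

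\textbf{Attainment.} For attainment, take $\ket\phi=p^{-1/2}\sum_{z\in S}\ket{a_z}\ket{b_z}$ for any subset $S$ with $\abs S=p$. This state has Schmidt rank $p\le\chi$. Its overlap with $\ket{\Psi_k}$ is $p\cdot p^{-1/2}2^{-k/2}$, so the squared overlap is $p/2^k$ and equality holds.

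\textbf{MPS statement.} I expect the main care to lie here. An open-boundary MPS with global maximum bond $\chi$ has Schmidt rank at most $\chi$ across every cut, including the declared one, so the upper bound applies to the whole family. For the equality claim I must also exhibit a maximizer that is an MPS with all bonds $\le\chi$. The Bell pairs are ordered so that pair $j$ has its left qubit in $L$ and its right qubit in $R$, and I will pick $S$ as a product structure. When $\chi\ge 2^{j}$ for the appropriate $j$, I keep all $z$ agreeing on the first $k-j$ pair bits, set to $0$, and range freely over $j$ pairs, with $j=\lfloor\log_2\chi\rfloor$.

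The resulting state is a product of $\ket{00}$ factors and $j$ Bell pairs. Its bond dimension across any cut is at most $2^j\le\chi$, because each cut is crossed by at most the $j$ active pairs. When $\chi$ is not a power of two, this construction gives only $2^j/2^k$, so I would instead take a general subset $S$ of size $\chi$. I would then bound every other cut rank by $\abs S=\chi$, using the fact that Schmidt rank across any cut of a superposition of $\chi$ product-in-$z$ terms is at most $\chi$. Each term $\ket{a_z}\ket{b_z}$ is a product state over all qubits, so a sum of $\chi$ such terms has rank at most $\chi$ at every cut. This settles the MPS optimality claim.

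\textbf{Path independence.} The final sentence is then immediate. Any procedure, whatever its path, outputs an element of the final family, and every element of that family obeys the bound.
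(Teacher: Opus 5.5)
Your proposal is correct and follows essentially the same route as the paper. Both use the flat Schmidt spectrum of $\ket{\Psi_k}$, a rank-$\chi$ overlap bound, attainment by renormalized truncation to $\min\{\chi,2^k\}$ matched terms, the fact that this truncation is a sum of at most $\chi$ computational-basis product strings (hence an MPS with all bonds at most $\chi$), and the fact that any MPS bond bounds the Schmidt rank at the declared cut. The only difference is that you derive the Eckart--Young/Ky Fan step explicitly via von Neumann's trace inequality and Cauchy--Schwarz, where the paper simply cites that variational principle; you can also drop the abandoned partial formula with $M_i$ and the power-of-two detour, since the general-subset argument supersedes both.
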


\begin{proof}
The $2^k$ Schmidt coefficients of $\ket{\Psi_k}$ are all $2^{-k/2}$.  The
Schmidt-form Eckart--Young/Ky Fan variational principle implies that the largest
squared overlap with a normalized rank-$\chi$ vector is the sum of the largest
$\min\{\chi,2^k\}$ squared Schmidt coefficients, giving
Eq.~\eqref{eq:rc-bell-cap}.  The normalized truncation attains equality.  It is
a sum of at most $\chi$ computational-basis product strings, hence has Schmidt
rank at most $\chi$ across every site cut and admits an exact open-boundary MPS
with all bonds at most $\chi$.  Conversely, an MPS bond upper-bounds Schmidt
rank at its associated cut \citep{oseledets_tt,schollwock_mps}.
\end{proof}

For a target infidelity $\nu\in[0,1)$, reaching squared fidelity at least
$1-\nu$ necessarily requires
\begin{equation}
 \chi\ge\left\lceil(1-\nu)2^k\right\rceil.
 \label{eq:rc-bell-required-rank}
\end{equation}
For a $20|20$ partition containing twenty cross-block Bell pairs, two independent
20-qubit factors have $\chi=1$ and fidelity at most $2^{-20}$; exact coupling
through that interface requires $\chi=2^{20}$.  This is not a universal hardness
claim: interleaving paired qubits or changing to a tree aligned with the pairing
can remove that particular large MPS cut.  The theorem certifies the chosen
ordering and topology, consistent with entanglement-based limits on efficient
MPS simulation \citep{vidal_2003}.

\subsection{Certified continuation protocol}

A scientifically controlled forty-qubit study should distinguish the following
rung variables:
\begin{equation}
 s=(n,\bm\chi,\mathcal T,\mathcal B,\bm r,m),
 \label{eq:rc-resource-vector}
\end{equation}
where $n$ is width, $\bm\chi$ the bond profile, $\mathcal T$ the tensor-network
topology and ordering, $\mathcal B$ the block partition, $\bm r$ the flag ranks,
and $m$ the training or landmark count.  A recommended audit is:
\begin{enumerate}
 \item compute the unrestricted amplitude-Gram/Ky--Fan value as the empirical
       teacher certificate;
 \item initialize all forty qubits with a low-cost product or low-bond model;
 \item increase one declared resource at a time using an exact or certified
       approximate prolongation;
 \item activate new tensor directions with residual-informed enrichment and
       accept only nondecreasing true-objective updates;
 \item recompute $\varepsilon$, $\zeta$, boundary gaps, and the unrestricted
       certificate gap at every encoder-changing rung;
 \item compare matched warm and cold starts at the same final feasible family,
       contraction budget, seeds, and train/validation/test discipline.
\end{enumerate}
Warm-start benefit must be reported as total time or contractions to a declared
target, final objective and certificate gap, negative-transfer frequency, and
peak memory.  The cost of all earlier rungs belongs to the warm-start total.

\paragraph{Relation to established components.}
The trace-distance variational identity, Ky Fan extremum, amplitude-Gram duality,
Davis--Kahan perturbation theorem, fixed-rank tensor manifolds, AMEn-style
enrichment, and Riemannian continuation are established ingredients
\citep{horn_johnson,scholkopf_kpca,davis_kahan_1970,
holtz_rohwedder_schneider_2012,dolgov_savostyanov_amen,
seguin_kressner_continuation}.  The Bell cap is a direct specialization of
optimal Schmidt truncation.  Our contribution is their integration into a
Gram-implicit, multirank flag-transfer audit, especially the explicit $2m$
certificate in Theorem~\ref{thm:rc-gram-certificate} combined with
Eq.~\eqref{eq:rc-main-certificate} and the exact/approximate rung distinction.
We do not present warm starting, tensor networks, block-coordinate
optimization, Nystr\"om approximation, or the established perturbation
principles themselves as new.

\subsection{Structured synthetic controls}
\label{sec:results}

Here $t\in[0,1]$ denotes the geodesic activation fraction of each newly added
four-qubit block: $t=0$ is the exact idle-block lift and $t=1$ is full
activation.  A \emph{warm} run inherits the preceding rung's block anchors,
whereas a \emph{cold} run constructs an initialization independently at the
current rung.  Attempted block updates are the primary matched-work unit.
Figure~\ref{fig:continuation-summary} and Table~\ref{tab:key-results}
summarize the executed controls.

\begin{figure}[t]
 \centering
 \includegraphics[width=\textwidth]{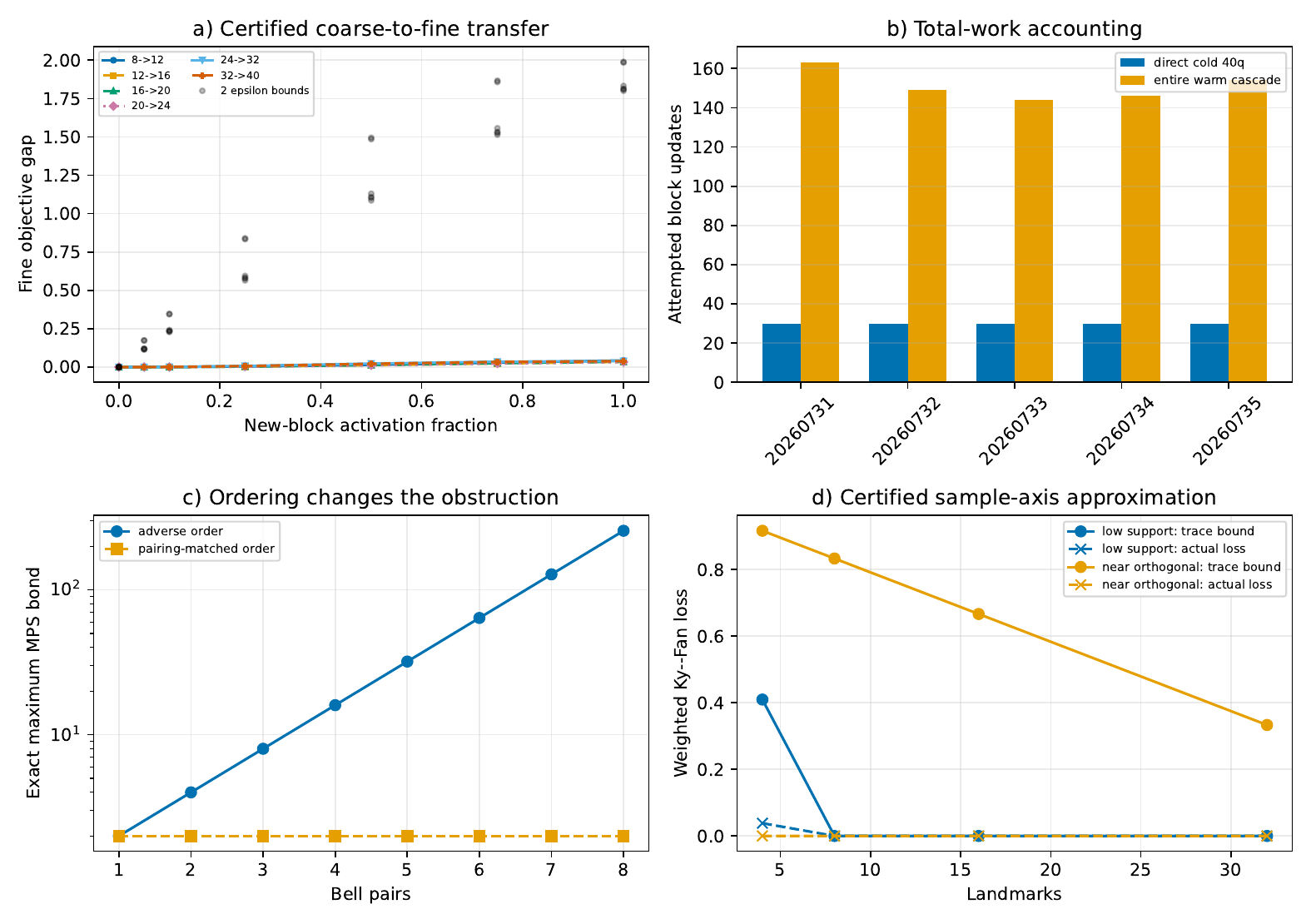}
 \caption{Executed structured synthetic controls.  \textbf{a}, transferred-flag
 objective gaps and their $2\varepsilon$ bounds.  \textbf{b}, complete cascade
 work versus a direct cold 40-qubit fit.  \textbf{c}, exact matrix-product-state
 (MPS) bond dependence on Bell-pair ordering.  \textbf{d}, landmark losses and residual-trace bounds in
 favorable and unfavorable support regimes}
 \label{fig:continuation-summary}
\end{figure}

\begin{table}[t]
\centering
\caption{Executed results and their permissible interpretation.  Values are
computed from the supplied synthetic output summaries; ``no material difference'' means the
absolute warm--cold objective difference was at most $4.04\times10^{-10}$.
Boldface marks the extreme value of each executed range.}
\label{tab:key-results}
\small
\begin{tabularx}{\textwidth}{>{\raggedright\arraybackslash}p{0.24\textwidth}
>{\raggedright\arraybackslash}p{0.28\textwidth}X}
\toprule
Audit & Executed result & Interpretation\\
\midrule
Exact width lift, all six transitions &
 $\varepsilon=7.81$--$\mathbf{8.82\times10^{-16}}$; maximum objective gap
 $\mathbf{8.33\times10^{-17}}$ &
The declared idle-qubit isometry preserves the empirical spectral flag to
rounding.\\
5\% new-block activation &
 $\varepsilon=0.0572$--$\mathbf{0.0872}$; maximum gap $\mathbf{3.20\times10^{-4}}$ &
The bound is below 0.175 but conservative; acceptance still requires a
predeclared tolerance.\\
Full activation &
 $\varepsilon=0.900$--$\mathbf{0.995}$; mean gap $0.0372$ &
The certificate becomes noninformative and should reject an uncertified transition.\\
Final-rung warm versus cold &
20 versus $\mathbf{30}$ block updates in all five seeds; no material objective difference &
Warm initialization saves one final-rung sweep in this controlled family.\\
Complete warm cascade &
 $4.80$--$\mathbf{5.43\times}$ the direct final-rung block-update count &
Final-rung savings do not imply end-to-end savings.\\
Exact 40-qubit block merge &
 maximum preservation residual $\mathbf{3.47\times10^{-18}}$; no material correction gain &
The prolongation is exact, but extra expressivity was unnecessary for this
family.\\
Eight Bell pairs &
 bond $\mathbf{256}$ adverse versus bond 2 pairing-matched &
Topology/order can dominate bond escalation; the reduction factor is 128.\\
Landmark certificate &
exact at eight selected pivots for the embedded eight-dimensional family;
loose for the near-orthogonal family &
Landmark scaling is conditional on effective support, not nominal qubit width.\\
\bottomrule
\end{tabularx}
\end{table}

All 42 transfer rows satisfied both the
fixed-effect $\varepsilon$ inequality and the lifted-optimum $2\varepsilon$
inequality (Fig.~\ref{fig:continuation-summary}a).  At $t=0.05$, mean
$\varepsilon$ was 0.0682 while the mean observed gap was
$2.01\times10^{-4}$.  At full activation, mean $\varepsilon$ rose to 0.936 and
the nominal $2\varepsilon$ bound exceeded one.  This is not a failure of the
theorem: it is the intended signal that a direct coarse-to-fine transfer is no
longer justified and a smaller step, different embedding, or cold restart is
needed.

At the final 40-qubit rung, inheritance increased the initial product-anchor
objective by several orders of magnitude and reduced the completed sweeps from
three to two.  Nevertheless, the seven-rung cascade required a mean 151.2
attempted block updates, versus 30 for the direct final-rung solve, a mean work
ratio of 5.04.  Warm and cold final certificate gaps were both approximately
$2.32\times10^{-6}$ (Fig.~\ref{fig:continuation-summary}b).  The appropriate
conclusion is therefore conditional: this warm start improved the final
initialization and final-rung iteration count, but not total cost.

\section{Discussion}

The evidence supports a hierarchy of decisions rather than a universal
8-to-40-qubit prescription.  When an exact embedding exists, the coarse
solution is an objective-preserving initialization.  When the empirical
density changes, the block Gram supplies an objective-value certificate and,
with an eigengap, a geometric subspace certificate.  When a prespecified
acceptance tolerance is exceeded, additional rungs do not automatically help:
the encoding, ordering, topology, or resource allocation may need to change.

The topology control is especially consequential
(Fig.~\ref{fig:continuation-summary}c).  The same abstract product
of eight Bell pairs has maximum MPS bond 256 when all left partners precede all
right partners, but bond 2 when partners are adjacent.  Train-only correlation
or entanglement graphs and local tree-tensor-network (TTN) reconnections may therefore offer a larger
gain than uniform bond escalation \citep{hikihara_ttn,markov_shi}.  Such a
search is itself combinatorial and can overfit; topology must be selected
without test information.

The principal limitations are:
\begin{itemize}
  \item forty-qubit tractability is conditional on factorization, bounded bond,
  bounded treewidth, or another declared contraction structure;
  \item warm starts reduce search iterations, not the final Schmidt rank or
  the cost of one hard amplitude overlap;
  \item complex amplitude cross overlaps are required; fidelity-only data are
  insufficient;
  \item a bounded-rank set is stratified, so zero padding does not produce a
  regular point of a higher exact-rank manifold;
  \item the Gram/Ky--Fan gap is a training-return certificate, not a
  generalization, anomaly-detection, calibration, or quantum-advantage bound;
  \item eigenvalue degeneracy invalidates basis-vector transport; whole
  projectors or spectral clusters must be tracked;
  \item the landmark trace bound can be very loose for near-orthogonal states;
  and
  \item these experiments are synthetic and exploratory; a fresh
  preregistered confirmatory study remains future work.
\end{itemize}

Optimization on flag manifolds and the need to represent multiple ranks as one
nested object are established \citep{ye_flag,szwagier_pennec_flags}.
Riemannian continuation, multilevel low-rank optimization, fixed
tensor-train (TT)-rank geometry, alternating-minimal-energy (AMEn)
enrichment, MPS/density-matrix-renormalization-group (DMRG) methods, and
randomized Gram approximation are also
established.  Nakamura and Sanji place circuit-cutting overhead inside a
generative eigensolver constraint \citep{nakamura_constrained_gqe}; our use of
explicit bond, block, overlap, and topology budgets follows the same broad
resource-accounting lesson but addresses a different objective and algorithm.

A complementary perspective relates the proposed audits to established
classical-simulability regimes.  Stabilizer circuits, matchgate circuits,
bounded-entanglement and bounded-treewidth states, polynomially generated
Lie-algebra dynamics, and low-rank sample-access models provide distinct
conditions under which quantum calculations admit efficient classical
representations
\citep{aaronson_gottesman_2004,valiant_matchgates_2002,vidal_2003,markov_shi,
goh_gsim,tang_dequantization}.  Related results connect provable trainability
conditions to classical simulability \citep{cerezo_simulability}, construct
classical surrogates of quantum learning models
\citep{schreiber_surrogates,huang_power_of_data}, and give classical
algorithms for specified noisy-circuit observables and sampling regimes
\citep{schuster_noisy}.  The present work does not add a general
simulability theorem.  Conditional on efficient access to the required
structured contractions and complex overlaps, it audits whether a solution
may be transferred between two declared resource models with controlled
empirical-objective loss.  Acceptance certifies that transfer only.  Rejection
means that the chosen map and tolerance do not certify the transfer; it does
not exclude another efficient classical representation and does not imply
quantum advantage.

We therefore do not present warm starting, block-coordinate descent, tensor
networks, Nystr\"om approximation, or Riemannian continuation themselves as
new.  The contribution evaluated here is their integration into a
Gram-implicit flag-transfer audit: the $2m$ cross-rung certificate, its
$\delta_c+2\varepsilon$ consequence, the exact/approximate prolongation
distinction, and matched success and failure controls.

Warm starting can help a forty-qubit tensor-flag computation, but only after
the relationship between rungs is made explicit.  The proposed framework
certifies exact lifts, quantifies approximate transfers from amplitude Grams,
preserves objectives under valid resource enrichment, and rejects paths that
conflict with entanglement or topology limits.  In the executed controls, warm
starting reduced final-rung iterations but increased total work, while topology
changed the required Bell-pair bond by a factor of 128.  The appropriate next
step is therefore not a larger qubit ladder without additional certification.  It is a
prespecified warm/cold-by-topology Pareto study: matched total objective
calls, wall time, and, on hardware, shot budgets, crossed over canonical,
pair-balanced, localized, and adversarially unbalanced partitions, executed
first classically and then at 16--20 qubits on a tensor-network or
quantum-processing-unit (QPU) fragment backend, with adaptive certified steps, residual-based rank
growth, train-only topology selection, and fresh confirmatory data.
Matched-frontier evidence of that kind, rather than a larger qubit label
alone, is what could support a defensible forty-qubit capability claim.

\section{Methods}
\label{sec:methods}

\subsection{Continuation architecture}
\label{sec:architecture}

We use the unrestricted spectral flag as an empirical teacher and a structured
tensor/product flag as a student.  For training columns $\Phi$ and amplitude
Gram $G=\Phi^\dagger\Phi$, the nonzero eigenvalues of the empirical density
$\rho=\Phi\Phi^\dagger/m$ are those of $G/m$.  Hence, for ranks $r_\ell$ and
weights $\alpha_\ell\ge0$, the exact unrestricted certificate is
\begin{equation}
 U(G)=\sum_\ell\alpha_\ell\sum_{j=1}^{r_\ell}\lambda_j(G/m).
 \label{eq:teacher}
\end{equation}
A structured iterate with objective $J$ has the nonnegative training
certificate gap $U(G)-J$.  This is not a generalization bound.

One accepted continuation step is:
\begin{enumerate}
  \item identify the current problem by a hash covering the training data,
  encoder, sample weights, flag ranks, level weights, topology, and ordering;
  \item if the encoder changes, construct the fine/coarse/cross amplitude
  block Gram and compute $\varepsilon$ and the operator-norm perturbation;
  \item reject transfer when the declared objective or geometric acceptance
  threshold is exceeded; otherwise lift the flag and gauge-align its increments;
  \item if only the resource family changes, use an exact prolongation such as
  a larger bounded bond cap or an induced product-block merge;
  \item activate new rank directions by a two-site singular value decomposition
  (SVD) or projected residual,
  rather than relying on zero-padded inactive directions;
  \item perform safeguarded conditional Riemannian or exact local updates,
  accepting only nondecreasing true training objectives; and
  \item record the full cascade cost and compare against cold starts in the
  identical final feasible family.
\end{enumerate}

\subsection{Exact product-block merging}

Let the level-$\ell$ global product projector be
$A_\ell=\bigotimes_bP_{b\ell}$.  Merging adjacent blocks $b,c$ with
\begin{equation}
 P_{bc,\ell}=P_{b\ell}\otimes P_{c\ell},
 \qquad
 \rank(P_{bc,\ell})=r_{b\ell}r_{c\ell},
 \label{eq:block-merge}
\end{equation}
leaves $A_\ell$ and every return exactly unchanged.  The merged projector may
then move away from the Kronecker submanifold in the larger local Grassmann or
flag space.  For multilevel flags, the merged Stiefel columns must be ordered by
the first level at which each tensor-product column pair is present; unadjusted
lexicographic Kronecker ordering need not make every product subspace a prefix.

\subsection{Certified landmark flags}

Once qubit contractions are controlled, the $O(m^2)$ overlap count and
$O(m^3)$ dense eigendecomposition become the next bottleneck.  Let $S$ index
the selected landmark samples, let $\Phi_{:S}$ collect their state columns,
and let $Q$ be an orthonormal basis of $\operatorname{ran}\Phi_{:S}$.  With
$G_{SS}^{+}$ the Moore--Penrose pseudoinverse of the landmark block, define
\begin{equation}
 \widetilde G=G_{:S}G_{SS}^{+}G_{S:}.
\end{equation}
Then
\begin{equation}
 G-\widetilde G
 =\Phi^\dagger(I-QQ^\dagger)\Phi\succeq0.
\end{equation}
With $\eta=\tr(G-\widetilde G)/m$, Ky Fan monotonicity gives
\begin{equation}
 0\le \sum_{j=1}^{r}\lambda_j(G/m)
       -\sum_{j=1}^{r}\lambda_j(\widetilde G/m)
 \le\eta.
 \label{eq:landmark-bound}
\end{equation}
Thus a weighted flag with level-weight sum at most one loses at most $\eta$ in
the unrestricted training objective.  Randomly pivoted Cholesky supplies
strong entry-efficient guarantees in favorable kernel regimes
\citep{chen_random_cholesky}; our executed reference uses deterministic maximum
residual-diagonal pivots with the full Gram already supplied and therefore
claims no entry-oracle speedup.

\subsection{Structured synthetic evaluation}
\label{sec:protocol}

All reported results were produced by a deterministic internal research
implementation using the definitions and protocols described here.  The study is
synthetic, classical, and structured.  No dense 40-qubit vector, hardware
execution, molecular energy, or real-data test metric was used.

\paragraph{Synthetic witness.}
Each experiment used $m=48$ samples, $B=10$ four-qubit blocks, and local
dimension $d=16$.  Balanced generation labels $y_i\in\{0,1\}$ (24 per value)
were used only to construct the witness; no classifier or test set was used.
For sample $i$ and zero-indexed block $b$, independent real vectors
$\xi^{(R)}_{ib},\xi^{(I)}_{ib}\sim\mathcal N(0,I_{16})$ defined
\begin{equation}
 v_{ib}=0.28\bigl(\xi^{(R)}_{ib}+\mathrm{i}\xi^{(I)}_{ib}\bigr)
       +0.8\,e_{(y_i+b)\bmod 2},
 \qquad
 \ket{\psi_{ib}}=v_{ib}/\norm{v_{ib}}_2,
 \label{eq:synthetic-witness}
\end{equation}
where $e_j$ is the $j$th standard basis vector in $\C^{16}$.  Samples were
permuted after generation.  The implicit global state was
$\ket{\psi_i}=\bigotimes_{b=0}^{B-1}\ket{\psi_{ib}}$ and was never
materialized; global amplitude-Gram entries were evaluated as
$G_{ij}=\prod_b\braket{\psi_{ib}}{\psi_{jb}}$.  A PCG64 pseudorandom generator used seed 20260730 for the
transfer lane, seeds 20260731--20260735 for matched warm/cold runs, and seed
20260736 for the low-support landmark control.

\paragraph{Transfer lane.}
We exposed the fixed ladder $8,12,16,20,24,32,40$.  For each target block
$q$, its global phase was fixed so that $c=\langle e_0,q\rangle\ge0$ was real.
Writing $\theta=\arccos(c)$ and
$u=(q-ce_0)/\sin\theta$ when $\sin\theta>0$, the activation path was
$q(t)=\cos(t\theta)e_0+\sin(t\theta)u$; the degenerate case used $q(t)=e_0$.
We evaluated
$t\in\{0,0.05,0.10,0.25,0.50,0.75,1\}$.  For every transition we computed the
fine, coarse, and cross amplitude Grams, the exact transfer certificate, and
rank-$(1,2,4)$ spectral flags with weights $(0.5,0.3,0.2)$.

\paragraph{Warm/cold optimization lane.}
For seeds 20260731--20260735 we optimized one normalized anchor $a_b\in\C^{16}$
per active block by maximizing
\begin{equation}
 J(a_1,\ldots,a_{B'})=\frac1m\sum_{i=1}^{m}
   \prod_{b=1}^{B'}\abs{\langle a_b,\psi_{ib}\rangle}^2,
 \label{eq:product-anchor-objective}
\end{equation}
using exact conditional block updates.  The cold arm initialized $a_b$ as the
leading eigenvector of the local empirical density
$m^{-1}\sum_i\ket{\psi_{ib}}\!\bra{\psi_{ib}}$.  The warm arm inherited every
old anchor and applied the same local initialization only to newly introduced
blocks.  Both arms stopped when one sweep improved $J$ by at most $10^{-12}$
or after 20 sweeps.  Attempted block updates are the primary work unit; wall
time is descriptive.

\paragraph{Resource and failure controls.}
For the first three warm/cold seeds, at 40 qubits we exactly merged the first
two four-qubit blocks into one eight-qubit block and then applied safeguarded
corrections.  We represented
one through eight Bell pairs under both a central cross-cut ordering and an
adjacent-pair ordering.  Finally, we evaluated the landmark bound on both the
near-orthogonal 40-qubit product family and a 40-qubit family supported on an
embedded eight-dimensional active subspace.  The latter was generated by
normalizing 48 independent complex Gaussian columns in $\C^8$ using seed
20260736.  Landmark counts were $4,8,16,$ and $32$, with deterministic
maximum-residual-diagonal pivoting applied to the supplied full Gram matrix
(Fig.~\ref{fig:continuation-summary}d).

\paragraph{Confirmatory status.}
The results reported here are the executed synthetic exploratory study and
the stated deterministic controls.  A fresh preregistered confirmatory study
with separately fixed seeds and analysis decisions remains future work; no
unexecuted registry row is presented or distributed as evidence.

\subsection{Computational and writing assistance}

OpenAI ChatGPT/Codex and Anthropic Claude were used as tools for mathematical
and software stress-testing, code review, and language assistance.  They were
not treated as authors or as sources of scientific evidence.  The human
authors selected the research questions and methods, inspected and corrected
tool outputs, executed and verified the calculations, and verified every
theorem, proof, numerical result, citation, and claim.  The human authors
retain full responsibility for the work.

\section*{Data availability}

All study data are synthetic.  The six machine-readable CSV files underlying
Fig.~\ref{fig:continuation-summary}, Table~\ref{tab:key-results}, and every
reported numerical summary are supplied as Online Resource~1.  They contain
136 data rows and no source code, credentials, customer data, or private
operating rules.  The generation procedure, seeds, sample count, qubit ladder,
flag ranks and weights, stopping rule, and primary work unit are specified in
Methods.  No human, animal, clinical, or hardware-provider data were used.

\section*{Code availability}

The implementation and reproduction code is company-owned research software
of Pattern Recognition Pty Ltd and is not publicly released.  To permit
article-level verification without disclosing proprietary implementation,
Online Resource~1 supplies all numerical outputs needed to reconstruct the
reported figure, table, and summaries, while the manuscript provides the
mathematical definitions, theorem statements, and complete experimental
protocol.  No production scheduler, provider credentials, hardware-specific
configuration, customer data, or unexecuted protocol registry is released.

\section*{Supplementary information}

Online Resource~1: six machine-readable CSV files containing the 136 synthetic
output rows underlying Fig.~\ref{fig:continuation-summary},
Table~\ref{tab:key-results}, and all reported numerical summaries.

\section*{Acknowledgements}

The authors acknowledge assistance from OpenAI ChatGPT/Codex and Anthropic Claude with language refinement, critical examination of manuscript arguments, and software review. Any resulting suggestions incorporated into the manuscript were evaluated and, where relevant, independently verified by the authors, who retain full responsibility for the work.

\section*{Statements and Declarations}

\subsection*{Funding}

No external funding was received for this study.

\subsection*{Ethics approval}

Not applicable.  No human participants, animals, or personal data were
involved in this study.

\subsection*{Author contributions}

Azadeh Alavi (A.Az.), Fatemeh Kouchmeshki (F.K.), and Hossein Akhoundi (H.A.)
contributed equally to conceptualization, methodology, formal analysis,
software, validation, investigation, and drafting and revision of the
manuscript.  A.Az. coordinated the project.  Abdolrahman Alavi (A.Ab.) provided
supervision and critical review of the manuscript.  All authors approved the
submitted version and accept responsibility for the integrity of the work.

\subsection*{Competing interests}

The authors and Pattern Recognition Pty Ltd have proprietary and commercial
interests in geometry-aware quantum-representation and resource-continuation
software.  An Australian provisional patent application has been
filed; A.Az., F.K., and H.A. are named inventors.  A.Az. is affiliated with
RMIT University and Pattern Recognition Pty Ltd; the other authors are
affiliated with Pattern Recognition Pty Ltd.  These interests did not alter
the study design, the complete reporting of negative results, or the stated
limitations.

\bibliographystyle{plainnat}
\bibliography{theory_references,references}

@article{aaronson_gottesman_2004,
  author  = {Scott Aaronson and Daniel Gottesman},
  title   = {Improved Simulation of Stabilizer Circuits},
  journal = {Physical Review A},
  volume  = {70},
  number  = {5},
  pages   = {052328},
  year    = {2004},
  doi     = {10.1103/PhysRevA.70.052328}
}

@article{valiant_matchgates_2002,
  author  = {Leslie G. Valiant},
  title   = {Quantum Circuits That Can Be Simulated Classically in Polynomial Time},
  journal = {SIAM Journal on Computing},
  volume  = {31},
  number  = {4},
  pages   = {1229--1254},
  year    = {2002},
  doi     = {10.1137/S0097539700377025}
}

@article{goh_gsim,
  author  = {Matthew L. Goh and Martin Larocca and Lukasz Cincio and M. Cerezo and Fr{\'e}d{\'e}ric Sauvage},
  title   = {Lie-Algebraic Classical Simulations for Quantum Computing},
  journal = {Physical Review Research},
  volume  = {7},
  pages   = {033266},
  year    = {2025},
  doi     = {10.1103/PhysRevResearch.7.033266},
  url     = {https://doi.org/10.1103/PhysRevResearch.7.033266}
}

@inproceedings{tang_dequantization,
  author    = {Ewin Tang},
  title     = {A Quantum-Inspired Classical Algorithm for Recommendation Systems},
  booktitle = {Proceedings of the 51st Annual ACM SIGACT Symposium on Theory of Computing (STOC)},
  pages     = {217--228},
  year      = {2019},
  doi       = {10.1145/3313276.3316310}
}

@article{huang_power_of_data,
  author  = {Hsin-Yuan Huang and Michael Broughton and Masoud Mohseni and Ryan Babbush and Sergio Boixo and Hartmut Neven and Jarrod R. McClean},
  title   = {Power of Data in Quantum Machine Learning},
  journal = {Nature Communications},
  volume  = {12},
  pages   = {2631},
  year    = {2021},
  doi     = {10.1038/s41467-021-22539-9}
}

@article{cerezo_simulability,
  author  = {M. Cerezo and Martin Larocca and Diego Garc{\'i}a-Mart{\'i}n and N. L. Diaz and Paolo Braccia and Enrico Fontana and Manuel S. Rudolph and Pablo Bermejo and Aroosa Ijaz and Supanut Thanasilp and Eric R. Anschuetz and Zo{\"e} Holmes},
  title   = {Does Provable Absence of Barren Plateaus Imply Classical Simulability?},
  journal = {Nature Communications},
  volume  = {16},
  pages   = {7907},
  year    = {2025},
  eprint  = {2312.09121},
  archivePrefix = {arXiv},
  primaryClass  = {quant-ph},
  doi     = {10.1038/s41467-025-63099-6}
}

@misc{schuster_noisy,
  author        = {Thomas Schuster and Chao Yin and Xun Gao and Norman Y. Yao},
  title         = {A Polynomial-Time Classical Algorithm for Noisy Quantum Circuits},
  year          = {2024},
  eprint        = {2407.12768},
  archivePrefix = {arXiv},
  primaryClass  = {quant-ph},
  doi           = {10.48550/arXiv.2407.12768},
  url           = {https://arxiv.org/abs/2407.12768}
}

@article{schreiber_surrogates,
  author  = {Franz J. Schreiber and Jens Eisert and Johannes Jakob Meyer},
  title   = {Classical Surrogates of Quantum Learning Models},
  journal = {Physical Review Letters},
  volume  = {131},
  number  = {10},
  pages   = {100803},
  year    = {2023},
  doi     = {10.1103/PhysRevLett.131.100803}
}

@article{chen_random_cholesky,
  author        = {Yifan Chen and Ethan N. Epperly and Joel A. Tropp and Robert J. Webber},
  title         = {Randomly Pivoted {C}holesky: Practical Approximation of a Kernel Matrix with Few Entry Evaluations},
  journal       = {Communications on Pure and Applied Mathematics},
  volume        = {78},
  number        = {5},
  pages         = {995--1041},
  year          = {2025},
  doi           = {10.1002/cpa.22234},
  url           = {https://doi.org/10.1002/cpa.22234}
}

@article{hikihara_ttn,
  author  = {Toshiya Hikihara and Hiroshi Ueda and Kouichi Okunishi and Kenji Harada and Tomotoshi Nishino},
  title   = {Automatic Structural Optimization of Tree Tensor Networks},
  journal = {Physical Review Research},
  volume  = {5},
  pages   = {013031},
  year    = {2023},
  doi     = {10.1103/PhysRevResearch.5.013031},
  url     = {https://doi.org/10.1103/PhysRevResearch.5.013031}
}

@article{markov_shi,
  author  = {Igor L. Markov and Yaoyun Shi},
  title   = {Simulating Quantum Computation by Contracting Tensor Networks},
  journal = {SIAM Journal on Computing},
  volume  = {38},
  number  = {3},
  pages   = {963--981},
  year    = {2008},
  doi     = {10.1137/050644756}
}

@article{nakamura_constrained_gqe,
  author        = {Junya Nakamura and Shinichiro Sanji},
  title         = {Generative Quantum Eigensolver with Constrained Circuit-Cutting Overhead},
  journal       = {arXiv preprint arXiv:2509.08351},
  year          = {2025},
  note          = {Version 3, revised 10 May 2026},
  eprint        = {2509.08351},
  archivePrefix = {arXiv},
  primaryClass  = {quant-ph},
  url           = {https://arxiv.org/abs/2509.08351}
}

@misc{szwagier_pennec_flags,
  author        = {Tom Szwagier and Xavier Pennec},
  title         = {Nested Subspace Learning with Flags},
  year          = {2025},
  eprint        = {2502.06022},
  archivePrefix = {arXiv},
  primaryClass  = {stat.ML},
  doi           = {10.48550/arXiv.2502.06022},
  url           = {https://arxiv.org/abs/2502.06022}
}

@misc{alavi_invariance_2026,
  author        = {Azadeh Alavi and Fatemeh Kouchmeshki and Hossein Akhoundi},
  title         = {Invariance Audits for Quantum Kernels and Variational Rewinding: A Real-to-Hermitian Taxonomy of Projector, Flag, Anchor, and Density Geometry},
  year          = {2026},
  eprint        = {2607.07927},
  archivePrefix = {arXiv},
  primaryClass  = {quant-ph},
  doi           = {10.48550/arXiv.2607.07927},
  url           = {https://arxiv.org/abs/2607.07927}
}

@article{davis_kahan_1970,
  author  = {Chandler Davis and W. M. Kahan},
  title   = {The Rotation of Eigenvectors by a Perturbation. {III}},
  journal = {SIAM Journal on Numerical Analysis},
  volume  = {7},
  number  = {1},
  pages   = {1--46},
  year    = {1970},
  doi     = {10.1137/0707001}
}

@article{dolgov_savostyanov_amen,
  author  = {Sergey V. Dolgov and Dmitry V. Savostyanov},
  title   = {Alternating Minimal Energy Methods for Linear Systems in Higher Dimensions. {Part I}: {SPD} Systems},
  journal = {SIAM Journal on Scientific Computing},
  volume  = {36},
  number  = {5},
  pages   = {A2248--A2271},
  year    = {2014},
  doi     = {10.1137/140953289},
  url     = {https://doi.org/10.1137/140953289}
}

@article{holtz_rohwedder_schneider_2012,
  author  = {Sebastian Holtz and Thorsten Rohwedder and Reinhold Schneider},
  title   = {On Manifolds of Tensors of Fixed {TT}-Rank},
  journal = {Numerische Mathematik},
  volume  = {120},
  number  = {4},
  pages   = {701--731},
  year    = {2012},
  doi     = {10.1007/s00211-011-0419-7}
}

@book{horn_johnson,
  author    = {Roger A. Horn and Charles R. Johnson},
  title     = {Matrix Analysis},
  edition   = {Second},
  publisher = {Cambridge University Press},
  address   = {Cambridge},
  year      = {2013},
  doi       = {10.1017/CBO9781139020411}
}

@article{knyazev_lobpcg,
  author  = {Andrew V. Knyazev},
  title   = {Toward the Optimal Preconditioned Eigensolver: Locally Optimal Block Preconditioned Conjugate Gradient Method},
  journal = {SIAM Journal on Scientific Computing},
  volume  = {23},
  number  = {2},
  pages   = {517--541},
  year    = {2001},
  doi     = {10.1137/S1064827500366124}
}

@article{oseledets_tt,
  author  = {Ivan V. Oseledets},
  title   = {Tensor-Train Decomposition},
  journal = {SIAM Journal on Scientific Computing},
  volume  = {33},
  number  = {5},
  pages   = {2295--2317},
  year    = {2011},
  doi     = {10.1137/090752286}
}

@article{schollwock_mps,
  author  = {Ulrich Schollw{\"o}ck},
  title   = {The Density-Matrix Renormalization Group in the Age of Matrix Product States},
  journal = {Annals of Physics},
  volume  = {326},
  number  = {1},
  pages   = {96--192},
  year    = {2011},
  doi     = {10.1016/j.aop.2010.09.012}
}

@article{scholkopf_kpca,
  author  = {Bernhard Sch{\"o}lkopf and Alexander Smola and Klaus-Robert M{\"u}ller},
  title   = {Nonlinear Component Analysis as a Kernel Eigenvalue Problem},
  journal = {Neural Computation},
  volume  = {10},
  number  = {5},
  pages   = {1299--1319},
  year    = {1998},
  doi     = {10.1162/089976698300017467}
}

@misc{seguin_kressner_continuation,
  author        = {Axel S{\'e}guin and Daniel Kressner},
  title         = {Continuation Methods for {R}iemannian Optimization},
  year          = {2021},
  eprint        = {2106.08839},
  archivePrefix = {arXiv},
  primaryClass  = {math.OC},
  doi           = {10.48550/arXiv.2106.08839},
  url           = {https://arxiv.org/abs/2106.08839}
}

@article{sutti_vandereycken_multigrid,
  author  = {Marco Sutti and Bart Vandereycken},
  title   = {Riemannian Multigrid Line Search for Low-Rank Problems},
  journal = {SIAM Journal on Scientific Computing},
  volume  = {43},
  number  = {3},
  pages   = {A1803--A1831},
  year    = {2021},
  doi     = {10.1137/20M1337430},
  url     = {https://doi.org/10.1137/20M1337430}
}

@article{vidal_2003,
  author  = {Guifr{\'e} Vidal},
  title   = {Efficient Classical Simulation of Slightly Entangled Quantum Computations},
  journal = {Physical Review Letters},
  volume  = {91},
  number  = {14},
  pages   = {147902},
  year    = {2003},
  doi     = {10.1103/PhysRevLett.91.147902}
}

@article{ye_flag,
  author  = {Ke Ye and Ken Sze-Wai Wong and Lek-Heng Lim},
  title   = {Optimization on Flag Manifolds},
  journal = {Mathematical Programming},
  volume  = {194},
  number  = {1--2},
  pages   = {621--660},
  year    = {2022},
  doi     = {10.1007/s10107-021-01640-3}
}

\end{document}